\tikzset{
    -Latex,auto,node distance =1 cm and 1 cm,semithick,
    state/.style ={ellipse, draw, minimum width = 0.7 cm},
    point/.style = {circle, draw, inner sep=0.06cm,fill,node contents={}},
    bidirected/.style={Latex-Latex,dashed},
    el/.style = {inner sep=2pt, align=left, sloped}
}
\newtheorem{theorem}{Theorem}
\newtheorem{corollary}{Corollary}
\title{Nonparametric Bounds for Evaluating the Clinical Utility of Treatment Rules}
\author[1]{Johannes Hruza \thanks{corresponding authors: $\{$johannes.hruza, samir.bhatt$\}$@sund.ku.dk}}
\author[2,3]{Erin Gabriel}
\author[4]{Arvid Sjölander}
\author[1,5]{Samir Bhatt$^*$}
\author[2,3]{Michael Sachs}
\affil[1]{Section of Health Data Science and AI, University of Copenhagen, Copenhagen, Denmark}
\affil[2]{Section of Biostatistics, University of Copenhagen,  Copenhagen, Denmark}
\affil[3]{Pioneer Centre for SMARTbiomed, University of Copenhagen, Copenhagen, Denmark}
\affil[4]{Department of Medical Epidemiology and Biostatistics, Karolinska Institutet, Stockholm, Sweden}
\affil[5]{MRC Centre for Global Infectious Disease Analysis, Imperial College London, London, UK}
\date{\today}
\begin{document}

\maketitle
Evaluating the value of new clinical treatment rules based on patient characteristics is important but often complicated by hidden confounding factors in observational studies. Standard methods for estimating the average patient outcome if a new rule were universally adopted typically rely on strong, untestable assumptions about these hidden factors. This paper tackles this challenge by developing nonparametric bounds—a range of plausible values—for the expected outcome under a new rule, even with unobserved confounders present. We propose and investigate two main strategies for derivation of these bounds. We extend these techniques to incorporate Instrumental Variables (IVs), which can help narrow the bounds, and to directly estimate bounds on the difference in expected outcomes between the new rule and an existing clinical guideline. In simulation studies we compare the performance and width of bounds generated by the reduction and conditioning strategies in different scenarios. The methods are illustrated with a real-data example about prevention of peanut allergy in children. Our bounding frameworks provide robust tools for assessing the potential impact of new clinical treatment rules when unmeasured confounding is a concern. \\
\textbf{Keywords: Causal Inference, Partial Identification, Treatment Rules} 

\section{Introduction}

While numerous risk scores are available in clinical practice, many have not been effectively evaluated for their role in guiding medical decision-making. One formal approach to evaluating such scores is by assessing the expected outcome if treatment decisions were based on the score, often framed within the context of dynamic treatment regimes (DTRs) or individualized treatment rules (ITRs). A DTR (or ITR, for a single treatment point) is a sequence of treatment rules, mapping patient covariates to recommended treatments \citep{Murphy2003, Chakraborty2013, Tsiatis2019}. The value of a specific rule, $f$, derived from a risk score based on covariates $\boldsymbol{X}$, can be quantified by the average outcome observed if all subjects in the population of interest were treated according to that rule. Existing methods for estimating and optimizing this quantity often rely on the strong assumption of no unmeasured confounding given the observed covariates \citep{Zhao2012}.

However, in many real-world observational settings, particularly when evaluating a new rule $f(\boldsymbol{X})$, it is plausible that unmeasured  factors $\boldsymbol{U}$ influence both the treatment $A$ actually received and the outcome $Y$. These unmeasured factors (for example visual appearance to a clinician) might already be implicitly considered in the current Standard of Care (SOC), leading to confounding bias in later analysis if ignored.
Assessing the clinical utility \citep{Sachs2020, Hruza2025, Brand23} of a rule $f(\boldsymbol{X})$, defined as the improvement over the current standard of care, $CU(f) = E[Y(A=f(\boldsymbol{X}))] - E[Y]$, ideally accounts for such complexities. 
Here, $f(X)$ varies across subjects, and $Y(A=f(X))$ denotes the outcome had each subject’s treatment $A$ been set to their specific rule-based value.
While the SOC $E[Y]$ is typically observable, estimating the counterfactual outcome under the treatment rule $E[Y(A=f(\boldsymbol{X}))]$ remains challenging under unmeasured confounding.

Partial identification provides a framework for deriving valid bounds on causal estimands without relying on the common assumption of no unmeasured confounding \citep{Manski1990, Manski2003}. Researchers have also investigated the use of mild, interpretable assumptions—often through sensitivity analysis—to further narrow these bounds   \citep{Yadlowsky2022, Dorn2021, Ding2016, Pena2017}. Several authors have applied bounding techniques, often using Instrumental Variables (IVs), to estimate treatment effects or optimize DTRs \citep{Balke1997, Zhang2012, Qiu2020, Cui2021}. For instance, \citet{Balke1997} pioneered the use of linear programming (LP) to derive sharp bounds for the Average Treatment Effect (ATE) in a binary IV setting. This LP approach has been generalized by \citet{Sachs2022} for a broader class of causal queries and DAGs. Recent work has also explored bounds for super-optimal treatment rules \citep{Laurendeau2024}.

Despite these advances, bounding the specific quantity $E[Y(A=f(\boldsymbol{X}))]$ relevant for clinical utility in the presence of unmeasured confounders and potentially an IV remains an area needing further exploration. Our goal is to establish nonparametric bounds on $E[Y(A=f(\boldsymbol{X}))]$ under such conditions, as directly applying LP methods to the problem is computationally infeasible and grows  complexity with the levels of $X$.  We investigate two primary strategies:
\begin{enumerate}
    \item[(i)]\textbf{Reduction:} ignoring the information in the covarites $\boldsymbol{X}$ given the information contained in $f(\boldsymbol{X})$, thereby simplifying the causal graph and applying existing bounding machinery.
    \item[(ii)] \textbf{Conditioning:} Explicitly conditioning on $\boldsymbol{X}$, deriving bounds within strata defined by $\boldsymbol{X}=\boldsymbol{x}$, and then marginalizing these bounds over the distribution of $\boldsymbol{X}$.
\end{enumerate}
We compare these approaches, analyze their theoretical justification, and explore their extension to settings with IVs and additional observed covariates $W$ as well under the influence of existing guidelines $g$ that also map the covariates $\boldsymbol{X}$ to a treatment $A$. We aim to understand conditions under which these methods yield valid and potentially sharp bounds for clinical utility assessment.

We structure the paper as follows: In Section \ref{section: setting} we introduce the notation and causal settings. In Section \ref{section: reduction} we describe the DAG reduction approach for bounding $E[Y(A=f(\boldsymbol{X}))]$. In Section \ref{section: conditioning} we describe the conditioning approach. In Section \ref{section: iv_and_beyond} we extend these methods to scenarios involving instrumental variables and additional covariates $W$ as well as guidelines $g$. In Section \ref{section: simulations} we present simulation results comparing the two approaches. In Section \ref{section: real data} we apply the techniques to real-world data in the field of immunology. Finally, in Section \ref{section: discussion} we discuss the findings, challenges, and implications.

\section{Setting and Notation} \label{section: setting}

We consider an observational setting where we are interested in the  impact of applying a treatment rule $f$ based on observed covariates $\boldsymbol{X}$. Let $A$ be the treatment variable, which we assume to be categorical with levels $\mathcal{A} = \{a_0, a_1, \dots, a_{k-1}\}$. Let $Y$ be the outcome variable, assumed binary ($Y \in \{0, 1\}$) for simplicity, representing an event like recovery or mortality. Let $\boldsymbol{X}$ be a vector of observed covariates, taking values in a set $\mathcal{X}$. Let $\boldsymbol{U}$ represent a set of unmeasured variables that confound the relationship between $A$ and $Y$. We make no distributional assumption on $\boldsymbol{U}$.

We are given a deterministic function $f: \mathcal{X} \to \mathcal{A}$, representing the treatment rule or risk score of interest. This rule maps the covariates $\boldsymbol{X}$ of an individual to a recommended treatment level $f(\boldsymbol{X}) \in \mathcal{A}$. Here, $f(\boldsymbol{X})$ is a random variable representing the treatment recommendation dictated by the rule $f$.

Our primary causal estimand is the expected counterfactual outcome under rule $f$:
\begin{align*}
    \theta_f = E[Y(A=f(\boldsymbol{X}))] = P(Y(A=f(\boldsymbol{X})) = 1),
\end{align*}
where $Y(a)$ denotes the potential outcome if treatment $a$ were assigned. The clinical utility can then be expressed as $CU(f) = \theta_f - E[Y]$. Initially we assume $E[Y]$ is point identifiable from the observed data distribution $P(Y)$, and our focus is on bounding $\theta_f$.

We can expand $\theta_f$ using the law of total probability over the recommendations $f(\boldsymbol{X})$:
\begin{align*}
    \theta_f = \sum_{a_j \in \mathcal{A}} P(Y(a_j)=1, f(\boldsymbol{X})=a_j).
\end{align*}

We will analyze this quantity and the clinical utility under different causal assumptions represented by Directed Acyclic Graphs (DAGs) and nonparametric structural equation models (NPSEM).

\subsection{Non-IV Setting}

Initially, we consider a setting without an IV. The causal relationships are depicted in Figure \ref{fig: full models} a). We assume the following NPSEM:
\begin{align*}
\boldsymbol{u} &\sim P(\boldsymbol{u}) \\
\boldsymbol{x} &= h_{X}(\boldsymbol{u},\epsilon_{X})\\
a &= h_A(\boldsymbol{u},\boldsymbol{x}, \epsilon_{A}) \\
y &= h_Y(a,\boldsymbol{u}, \boldsymbol{x},\epsilon_{Y}) \\
f(\boldsymbol{x}) &=f(\boldsymbol{x}) \quad \text{(deterministic in observed distribution)}
\end{align*}
where $h_X, h_A, h_Y$ are unknown functions and $\epsilon_X, \epsilon_A, \epsilon_Y$ are exogenous error terms independent of each other and of $\boldsymbol{U}$. The variable $f(\boldsymbol{X})$ is fully determined by $\boldsymbol{X}$.

A direct naïve approach to find sharp bounds is to rewrite the problem as
\begin{align}
\label{eq: naive}
    \theta_f = \sum_{\boldsymbol{x}_j \in \mathcal{X}} P(Y(f(\boldsymbol{x}_j))=1, \boldsymbol{X}=\boldsymbol{x}_j)
\end{align}
and then bound this quantity using the LP algorithm \verb|causaloptim| by \cite{Sachs2022}. There are several drawbacks with this approach. First, to satisfy the necessary assumptions of \verb|causaloptim| we have to assume that $\mathcal{X}$ is finite. The second is due to the algorithmic complexity which scales superexponentially with respect to the levels of $\mathcal{X}$. This makes it infeasible as most covariates (e.g. vital signs of a patients) are continuous and even if finite, it quickly becomes infeasible to derive the bounds in this way. Our goal is to find bounds for $\theta_f = E[Y(A=f(\boldsymbol{X}))]$ based on the observed data distribution $P(A, Y, \boldsymbol{X})$ for continuous and multivariate $\boldsymbol{X}$.

\begin{figure}
\captionsetup[subfigure]{font=footnotesize, position=bottom}
\caption{full causal models}
\label{fig: full models}
\centering
\subcaptionbox{Non IV }[.4\textwidth]{
\begin{tikzpicture}[node distance =0.85 cm and 0.85 cm]
    \node (a) [label = below left: $A$,  point];
    \node (y) [label = right: Y, point, right  = 2 of a];
    \node (u) [label = below: $\boldsymbol{U}$, point, below right = of a];
    \node (x) [label = above: $X$, point, above right = of a];
    \node (b) [label = above: $f(X)$, point, right = of x];

    \path (a) edge (y);
    \path[dotted] (u) edge (a);
    \path[dotted] (u) edge (y);
    \path[dotted] (u) edge (x);
    \path (x) edge (a);
    \path (x) edge (y);
    \path (x) edge (b);
 \end{tikzpicture}}
\subcaptionbox{Non IV with $g$}[.4\textwidth]{
\begin{tikzpicture}[node distance =0.85 cm and 0.85 cm]
    \node (a) [label = below left: $A$,  point];
    \node (g) [label = above: $g(X)$, point, above = of a];
    \node (x) [label = above: $X$, point, right = of g];
    \node (f) [label = above: $f(X)$, point, right = of x];
    \node (y) [label = right: Y, point, right  = 2 of a];
    \node (u) [label = below: $\boldsymbol{U}$, point, below right = of a];

    \path (a) edge (y);
    \path[dotted] (u) edge (a);
    \path[dotted] (u) edge (y);
    \path[dotted] (u) edge (x);
    \path (x) edge (a);
    \path (x) edge (y);
    \path (x) edge (b);
    \path (g) edge (a);
    \path (x) edge (g);
\end{tikzpicture}}

\subcaptionbox{IV}[.4\textwidth]{
\begin{tikzpicture}[node distance =0.85 cm and 0.85 cm]
    \node (a) [label = below left: $A$,  point];
    \node (y) [label = right: Y, point, right  = 2 of a];
    \node (u) [label = below: $\boldsymbol{U}$, point, below right = of a];
    \node (x) [label = above: $X$, point, above right = of a];
    \node (b) [label = above: $f(X)$, point, right = of x];
    \node (z) [label = left: $Z$, point, left = of a];

    \path (a) edge (y);
    \path[dotted] (u) edge (a);
    \path[dotted] (u) edge (y);
    \path[dotted] (u) edge (x);
    \path (x) edge (a);
    \path (x) edge (y);
    \path (x) edge (b);
    \path (z) edge (a);
 \end{tikzpicture}}
\subcaptionbox{IV with $g$}[.4\textwidth]{
\begin{tikzpicture}[node distance =0.85 cm and 0.85 cm]
    \node (a) [label = below left: $A$,  point];
    \node (y) [label = right: Y, point, right  = 2 of a];
    \node (u) [label = below: $\boldsymbol{U}$, point, below right = of a];
    \node (g) [label = above: $g(X)$, point, above = of a];
    \node (x) [label = above: $X$, point, right = of g];
    \node (b) [label = above: $f(X)$, point, right = of x];
    \node (z) [label = left: $Z$, point, left = of a];

    \path (a) edge (y);
    \path[dotted] (u) edge (a);
    \path[dotted] (u) edge (y);
    \path[dotted] (u) edge (x);
    \path (x) edge (a);
    \path (x) edge (y);
    \path (x) edge (b);
    \path (z) edge (a);
    \path (g) edge (a);
    \path (x) edge (g);
 \end{tikzpicture}}
\end{figure}
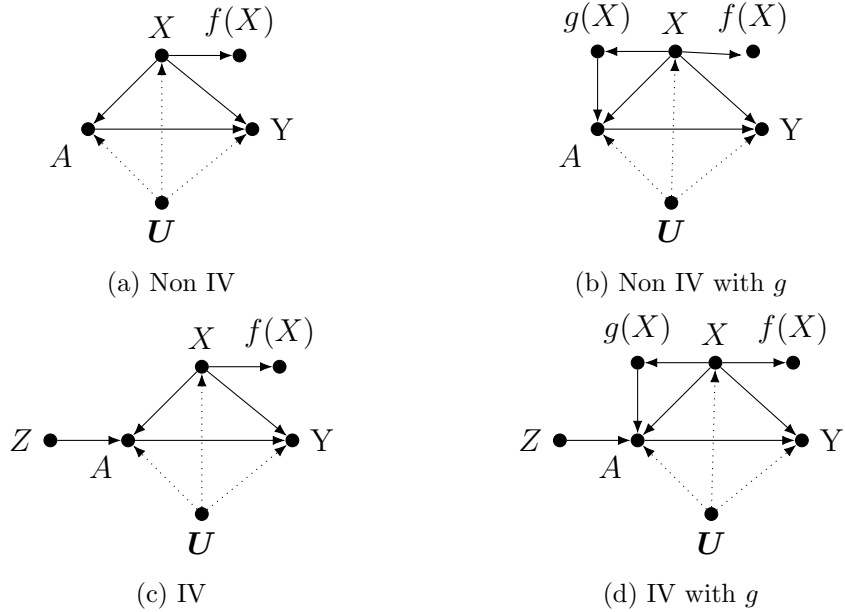

\section{Reduction}\label{section: reduction}
One strategy to obtain bounds on $\theta_f$ is to simplify the causal structure by ignoring the information in $\boldsymbol{X}$. This allows us to use bounding techniques developed for simpler graphical structures.
The validity of this approach relies on the principle that bounds derived for a model with fewer observed variables remain valid (though potentially not sharp) for a model where those variables are actually observed. We formalize this with the following Theorem.

\begin{theorem}[Valid Bounds Under Latent Variable Merging]
\label{theorem: merge latent}
    Let $G$ be a causal model with at least two latent variables $U_1, U_2$. Assume $U_1$ has no incoming edges, and $U_2$ has at most one incoming edge, which can only be from $U_1$ (i.e., $pa(U_1) = \emptyset$ and $pa(U_2) \subseteq \{U_1\}$). 
    
    Let $G'$ be the causal model constructed from $G$ by replacing $U_1$ and $U_2$ with a single latent variable $U = (U_1, U_2)$ that represents their joint distribution, with appropriate modifications to the structural equations.
    
    Let $\theta: \mathcal{P} \to \mathbb{R}$ be a causal parameter. Assume that the definition of $\theta$ does not involve interventions on
    $U_1$ or $U_2$. Then any valid bound for $\theta$ in $G'$ is also a valid bound for $\theta$ in $G$.
\end{theorem}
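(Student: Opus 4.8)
The plan is to show that $G$ is, in the relevant sense, a submodel of $G'$: every observed-data distribution together with its value of $\theta$ that is achievable under $G$ is also achievable under $G'$. Once this containment is established, validity transfers automatically, since a bound that holds across all of $\mathcal{M}(G')$ in particular holds across the image of $\mathcal{M}(G)$. First I would fix the meaning of a valid bound: a lower functional $\ell$ (resp.\ upper functional $u$) of the observed-data distribution $P_{\mathrm{obs}}$ is valid for $\theta$ in a model $\mathcal{M}$ if $\ell(P_{\mathrm{obs}}) \le \theta(P) \le u(P_{\mathrm{obs}})$ for every $P \in \mathcal{M}$ compatible with the graph. It therefore suffices to exhibit, for each $P \in \mathcal{M}(G)$, a distribution $P' \in \mathcal{M}(G')$ with $P'_{\mathrm{obs}} = P_{\mathrm{obs}}$ and $\theta(P') = \theta(P)$.

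The key construction is as follows: given an NPSEM realizing $P$ under $G$, I would keep every structural equation unchanged and simply bundle the two source latents, defining $U := (U_1, U_2)$ with $U_1 \sim P(U_1)$ and $U_2 = h_{U_2}(U_1, \epsilon_{U_2})$. Because $pa(U_1) = \emptyset$ and $pa(U_2) \subseteq \{U_1\}$, the bundled node $U$ inherits no parents, so $G'$ remains an acyclic NPSEM with $U$ as a single exogenous latent; each former child of $U_1$ or $U_2$ now reads the corresponding coordinate of $U$, using literally the same function and the same error term as before. This reproduces the identical joint law over all observed and latent variables, and in particular $P'_{\mathrm{obs}} = P_{\mathrm{obs}}$.

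It then remains to check $\theta(P') = \theta(P)$, and this is the step I expect to require the most care. Since $\theta$ is a causal parameter whose definition invokes only interventions on variables other than $U_1$ and $U_2$, I would argue that every post-intervention distribution entering $\theta$ is computed from structural equations that were carried over verbatim in the merge; the only modification concerns how $U_1, U_2$ are packaged, which is invisible to any intervention that does not target them. Consequently the truncated factorizations defining the relevant do-distributions coincide in $G$ and $G'$, giving $\theta(P') = \theta(P)$. Chaining the resulting equalities and inequalities, $\ell(P_{\mathrm{obs}}) = \ell(P'_{\mathrm{obs}}) \le \theta(P') = \theta(P) \le u(P'_{\mathrm{obs}}) = u(P_{\mathrm{obs}})$ for an arbitrary $P \in \mathcal{M}(G)$, yields validity of the bound in $G$.

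The subtlety to guard against, and the reason the two hypotheses cannot be dropped, is twofold. The structural conditions $pa(U_1) = \emptyset$ and $pa(U_2) \subseteq \{U_1\}$ are what guarantee that the merged node stays parentless, so no cycle is created and the NPSEM of $G'$ is well defined; had $U_2$ possessed an external parent, bundling could force $U$ to acquire parents or induce a cycle, and the verbatim transfer of equations would fail. The restriction that $\theta$ not intervene on $U_1$ or $U_2$ is equally essential, since an intervention on one coordinate alone cannot be reproduced once the coordinates are fused into $U$, which would break the equality of interventional laws and hence the transfer of the bound.
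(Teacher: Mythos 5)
Your proposal is correct and follows the same route the paper takes: exhibit, for each NPSEM compatible with $G$, an NPSEM compatible with $G'$ (by bundling the exogenous pair into $U=(U_1,U_2)$ and carrying every other structural equation over verbatim) that induces the same observed-data law and the same value of $\theta$, so that the feasible set of $(P_{\mathrm{obs}},\theta)$ pairs under $G$ is contained in that under $G'$ and any bound valid over the latter transfers to the former. Your closing remarks on why the parental restrictions on $U_1,U_2$ and the no-intervention-on-latents condition are each needed are exactly the right caveats.
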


Theorem \ref{theorem: merge latent} states that if you have two ``exogenous" latent variables ($U_1$ being a source, $U_2$ only potentially depending on $U_1$), you can merge them into a single composite latent variable $U=(U_1,U_2)$ without losing the validity for bounding a causal parameter $\theta$ (as long as $\theta$ is not defined by intervening on $U_1$ or $U_2$ themselves). This is a simplification step: bounds derived in the simpler model $G'$ (with merged $U$) are still valid for the original, more complex model $G$.

Proof of Theorem is provided in the Supplementary materials.

\begin{figure}
\captionsetup[subfigure]{font=footnotesize, position=bottom}
\caption{Reduced causal models}
\label{fig: reduced models}
\centering
\subcaptionbox{Non IV without $g$}[.4\textwidth]{
\begin{tikzpicture}[node distance =0.85 cm and 0.85 cm]
    \node (a) [label = below left: $A$,  point];
    \node (y) [label = right: Y, point, right  = 2 of a];
    \node (u) [label = below: $(\boldsymbol{U} \text{,} \boldsymbol{X})$, point, below right = of a];
    \node (b) [label = above: $f(\boldsymbol{X})$, point, above right = of a];
    
    \path (a) edge (y);
    \path[dotted] (u) edge (a);
    \path[dotted] (u) edge (y);
    \path[dotted] (u) edge (b);
\end{tikzpicture}}
\subcaptionbox{Non IV with $g$}[.4\textwidth]{
\begin{tikzpicture}[node distance =0.85 cm and 0.85 cm]
    \node (a) [label = below left: $A$,  point];
    \node (y) [label = right: Y, point, right  = 2 of a];
    \node (u) [label = below: $(\boldsymbol{U} \text{,} \boldsymbol{X})$, point, below right = of a];
    \node (g) [label = above: $g(\boldsymbol{X})$, point, above right = 1 and 0.33 of a];
    \node (b) [label = above: $f(\boldsymbol{X})$, point, right = of g];

    \path (a) edge (y);
    \path (g) edge (a);
    \path[dotted] (u) edge (a);
    \path[dotted] (u) edge (y);
    \path[dotted] (u) edge (b);
    \path[dotted] (u) edge (g);
    
\end{tikzpicture}}

\subcaptionbox{IV without $g$}[.4\textwidth]{
\begin{tikzpicture}[node distance =0.85 cm and 0.85 cm]
    \node (a) [label = below left: $A$,  point];
    \node (y) [label = right: Y, point, right  = 2 of a];
    \node (u) [label = below: $\boldsymbol{(\boldsymbol{U} \text{,} \boldsymbol{X})}$, point, below right = of a];
    \node (b) [label = above: $f(\boldsymbol{X})$, point, above right = of a];
    \node (z) [label = left: $Z$, point, left = of a];

    \path (a) edge (y);
    \path[dotted] (u) edge (a);
    \path[dotted] (u) edge (y);
    \path[dotted] (u) edge (b);
    \path (z) edge (a);

\end{tikzpicture}}
\subcaptionbox{IV with $g$}[.4\textwidth]{
\begin{tikzpicture}[node distance =0.85 cm and 0.85 cm]
    \node (a) [label = below left: $A$,  point];
    \node (y) [label = right: Y, point, right  = 2 of a];
    \node (u) [label = below: $\boldsymbol{(\boldsymbol{U} \text{,} \boldsymbol{X})}$, point, below right = of a];
    \node (g) [label = above: $g(\boldsymbol{X})$, point, above right = 1 and 0.33 of a];
    \node (b) [label = above: $f(\boldsymbol{X})$, point, right = of b];
    \node (z) [label = left: $Z$, point, left = of a];

    \path (g) edge (a);
    \path (a) edge (y);
    \path[dotted] (u) edge (a);
    \path[dotted] (u) edge (g);
    \path[dotted] (u) edge (b);
    \path[dotted] (u) edge (y);
    \path (z) edge (a);

\end{tikzpicture}}

\end{figure}
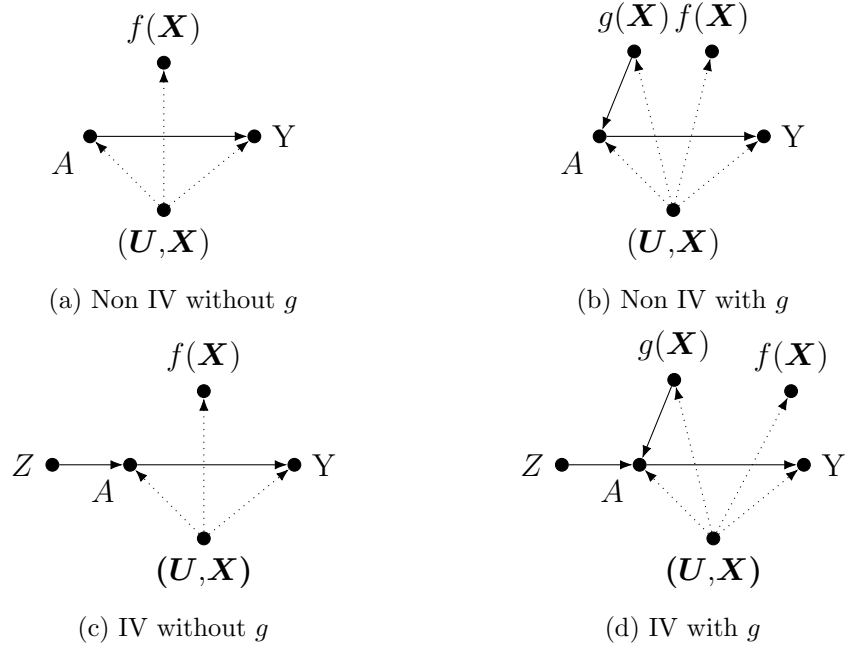

Consider the non-IV model in Figure \ref{fig: full models} a). We first treat $\boldsymbol{X}$ as unobserved or ignore the information contained in $\boldsymbol{X}$, while maintaining the information contained in $f(\boldsymbol{X})$. 
Then, applying Theorem \ref{theorem: merge latent}, we can merge the original latent variables $\boldsymbol{U}$ with the now-latent $\boldsymbol{X}$ into a single compound latent variable $\boldsymbol{U}^* = (\boldsymbol{U}, \boldsymbol{X})$. 
The modified NPSEM is then given by:
\begin{align*}
\boldsymbol{u^*} &\sim (P(\boldsymbol{u}), g_{X}(\boldsymbol{u},\epsilon_{X})) \\
a &= h_A(\pi_1(\boldsymbol{u^*}),\pi_2(\boldsymbol{u^*}), \epsilon_{A}) \\
y &= h_Y(a,\pi_1(\boldsymbol{u^*}), \pi_2(\boldsymbol{u^*}),\epsilon_{Y}) \\
f(\boldsymbol{x}) &= f(\pi_2(\boldsymbol{u^*})) \quad \text{(non deterministic in observed distribution)}, 
\end{align*}
where $\pi_1, \pi_2$ is the canonical projection from $\boldsymbol{U}^*$ to its first and second component. 
The resulting simplified causal structure is shown in Figure \ref{fig: reduced models} a).

In the reduced model of Figure \ref{fig: reduced models} a), the observed variables are $(A, Y, B)$. The variable $f(\boldsymbol{X})$ is now treated as an random variable. 

Using linear programming methods \citep{Balke1997, Sachs2022} applied to the structure in Figure \ref{fig: reduced models} a), we can derive sharp bounds for $\theta_f$ based on $P(A, Y, f(\boldsymbol{X}))$. For the case where $A$ is binary ($\mathcal{A}=\{a_0, a_1\}$) and $Y$ is binary, the sharp bounds on $\theta_f = P(Y(a_0)=1, f(\boldsymbol{X})=a_0) + P(Y(a_1)=1, f(\boldsymbol{X})=a_1)$ are given by:
\begin{align} \label{eq: binary non Iv}
 L_R &= p(A=0,Y=1,f\boldsymbol{X})=0)+p(A=1,Y=1,f\boldsymbol{X})=1) \\
 U_R &= 1 - P(A = 0, Y = 0, f\boldsymbol{X}) = 0) - P(A = 1, Y = 0, f\boldsymbol{X}) = 1)  \nonumber 
\end{align}

\begin{corollary}
The bounds $L_R$ and $U_R$ derived for the reduced DAG (Figure \ref{fig: reduced models} a)) are valid bounds for $\theta_f = E[Y(A=f(\boldsymbol{X}))]$ in the original non-IV DAG (Figure \ref{fig: full models} a) ).
\end{corollary}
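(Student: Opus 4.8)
The plan is to derive the corollary as an almost immediate consequence of Theorem~\ref{theorem: merge latent}, combined with the elementary observation that coarsening the observed data can only weaken, never invalidate, a bound. I would structure the argument in three moves: first record that $L_R, U_R$ are valid in the reduced model; then reinterpret $\boldsymbol{X}$ as latent and invoke the latent-merging theorem; and finally transport the validity statement back to the original model in which $\boldsymbol{X}$ is observed. To begin, I would note that the linear-programming construction cited above yields $L_R, U_R$ of \eqref{eq: binary non Iv} as valid (indeed sharp) bounds for $\theta_f$ in the reduced DAG of Figure~\ref{fig: reduced models}~a), whose observed variables are $(A, Y, f(\boldsymbol{X}))$ and whose sole latent is the compound confounder $(\boldsymbol{U}, \boldsymbol{X})$. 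It therefore suffices to show that every data-generating process consistent with the original DAG of Figure~\ref{fig: full models}~a) produces the same value of $\theta_f$ and the same observed marginal $P(A, Y, f(\boldsymbol{X}))$ as some process consistent with the reduced DAG.

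Next I would pass from Figure~\ref{fig: full models}~a) to an intermediate model $G$ in which $\boldsymbol{X}$ is simply declared unobserved while the structural equations are left untouched. Because $f(\boldsymbol{X})$ is retained as an observed variable and $\theta_f$ is a function of the structural mechanism rather than of whether $\boldsymbol{X}$ is recorded, neither $\theta_f$ nor $P(A, Y, f(\boldsymbol{X}))$ is altered by this relabelling. In $G$ the two latents are $\boldsymbol{U}$ and $\boldsymbol{X}$, and the NPSEM gives $pa(\boldsymbol{U}) = \emptyset$ together with $\boldsymbol{x} = h_X(\boldsymbol{u}, \epsilon_X)$, so that the only latent parent of $\boldsymbol{X}$ is $\boldsymbol{U}$, i.e. $pa(\boldsymbol{X}) \subseteq \{\boldsymbol{U}\}$, with the exogenous noise $\epsilon_X$ absorbed into the joint law of the merged variable exactly as in the modified NPSEM displayed above. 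These are precisely the hypotheses of Theorem~\ref{theorem: merge latent} with $U_1 = \boldsymbol{U}$ and $U_2 = \boldsymbol{X}$. I would also verify the remaining hypothesis, that $\theta_f = E[Y(A=f(\boldsymbol{X}))]$ involves no intervention on $U_1$ or $U_2$: the defining intervention sets $A$ to the value $f(\boldsymbol{X})$ while $\boldsymbol{U}$ and $\boldsymbol{X}$ take their natural values, so the condition holds. Theorem~\ref{theorem: merge latent} then yields that the bounds valid in $G'$, namely the reduced DAG of Figure~\ref{fig: reduced models}~a), remain valid in $G$.

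Finally I would transport the conclusion back to Figure~\ref{fig: full models}~a). Any data-generating process for the original model, with $\boldsymbol{X}$ observed, becomes — upon discarding the record of $\boldsymbol{X}$ — a process for $G$; since the interval $[L_R, U_R]$ depends on the process only through $P(A, Y, f(\boldsymbol{X}))$, contains $\theta_f$ for every process of $G$, and since both this marginal and $\theta_f$ are unchanged by whether $\boldsymbol{X}$ is observed, the same interval contains $\theta_f$ for the original model. This establishes that $L_R$ and $U_R$ are valid bounds in Figure~\ref{fig: full models}~a). I expect the only genuinely delicate point to be this coarsening step: one must be explicit that forgetting the observation of $\boldsymbol{X}$ while retaining $f(\boldsymbol{X})$ changes neither the target $\theta_f$ nor the marginal on which the bound is evaluated, so that validity is preserved rather than merely conjectured. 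The merging step, by contrast, is handled verbatim by Theorem~\ref{theorem: merge latent} once its parent conditions and the no-intervention-on-latents condition are checked.
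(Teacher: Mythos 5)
Your proposal is correct and follows the same route as the paper's own (very terse) proof: treat $\boldsymbol{X}$ as unobserved while retaining $f(\boldsymbol{X})$, then invoke Theorem~\ref{theorem: merge latent} with $U_1=\boldsymbol{U}$ and $U_2=\boldsymbol{X}$ to merge them into the compound latent of Figure~\ref{fig: reduced models}~a). Your explicit verification of the parent conditions and of the fact that the coarsening step changes neither $\theta_f$ nor $P(A,Y,f(\boldsymbol{X}))$ simply fills in details the paper leaves implicit.
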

\begin{proof}
from ignoring the information in $\boldsymbol{X}$ and Theorem \ref{theorem: merge latent} (merging $\boldsymbol{X}$ with $\boldsymbol{U}$).
\end{proof}

These bounds are guaranteed to be sharp for the simplified problem represented by Figure \ref{fig: reduced models} a). However, they may not be sharp for the original problem (Figure \ref{fig: full models} a)) as the reduction step discards all information in $P(\boldsymbol{X}|A,Y,B)$. Hence, the reduction step potentially discards information contained in the distribution $P(A, Y, \boldsymbol{X})$. Similar symbolic bounds can be computed using LP for cases where $A$ has more than two levels (e.g., for ternary $A$, we have written them out in the Supplement material).

\section{Bounding via Conditioning on Covariates} \label{section: conditioning}

Rather than treating $\boldsymbol{X}$ as latent, an alternative strategy uses the observed information by first computing bounds conditional on specific realizations $\boldsymbol{X}=\boldsymbol{x}$, followed by marginalization. Care is required when attempting to represent this conditioning within a causal model using a graphical structure. Specifically, one can mutilate the original DAG to reflect the conditioning and then apply causal bounds to this altered graph. However, this process is delicate, as conditioning in causal models (e.g., DAGs or NPSEMs) is not a closed operation: the post-conditioning structure may no longer be representable as a valid DAG or NPSEM. A well-known example is conditioning on a collider ($A \rightarrow B \leftarrow C$, conditioning on $B$), which induces dependencies that alter the graph's structure. Nevertheless, graphical criteria have been established by \cite{Richardson2002-bk} that identify when conditioning on a node yields a structure still representable as a DAG. These criteria are the foundations for our bounding method, which relies on such valid conditional structures. We provide a concise overview of these criteria and how to apply them in our setting in the Supplementary material.

\subsection{Conditional Bounds}

For a fixed value $\boldsymbol{X}=\boldsymbol{x}$, the treatment rule becomes deterministic: $f(\boldsymbol{x}) = a_{\boldsymbol{x}}$, where $a_{\boldsymbol{x}}$ is a specific treatment level in $\mathcal{A}$. We are interested in the conditional expectation $E[Y(A=f(\boldsymbol{X})) | \boldsymbol{X}=\boldsymbol{x}] = E[Y(A=a_{\boldsymbol{x}}) | \boldsymbol{X}=\boldsymbol{x}]$.

Conditioning on $\boldsymbol{X}=\boldsymbol{x}$ can simplify the causal structure among the remaining variables $(A, Y, \boldsymbol{U}, \dots)$. In the case of the causal model from Figure \ref{fig: full models} a), we apply the graphical criterion as described in the Supplementary material on how to condition on $\boldsymbol{X}= \boldsymbol{x}$. After the transformation the DAG is shown in Figure \ref{fig: conditional DAG} a). Alternatively we can look at the NPSEM which conditioned on $\boldsymbol{X}=\boldsymbol{x}$ is given by:

\begin{align*}
\boldsymbol{u} &\sim P(\boldsymbol{u} \mid x) \\
a &= g_A(\boldsymbol{u},\boldsymbol{x}, \epsilon_{A}) \\
y &= g_Y(a,\boldsymbol{u}, \boldsymbol{x},\epsilon_{Y}) \\
a_x &= f(\boldsymbol{x}) \quad \text{(deterministic in observed distribution)}
\end{align*}

Within the stratum $\boldsymbol{X}=\boldsymbol{x}$, we need bounds on $P(Y(a)=1 | \boldsymbol{X}=\boldsymbol{x})$ for any treatment level $a$ assigned by $f$. Let $L_a(\boldsymbol{x})$ and $U_a(\boldsymbol{x})$ denote sharp lower and upper bounds on $P(Y(a)=1 | \boldsymbol{X}=\boldsymbol{x})$, derived from the conditional observed data distribution and the appropriate conditional causal graph. These bounds are computed using LP methods (e.g. \verb|causaloptim| \citep{Sachs2022}) applied to the conditional problem. For example, in the simple confounded structure of Figure \ref{fig: conditional DAG} a), the symbolic bounds on $P(Y(a)=1 | \boldsymbol{X}=\boldsymbol{x})$ are functions of $P(A, Y | \boldsymbol{X}=\boldsymbol{x})$.

\begin{figure}
\captionsetup[subfigure]{font=footnotesize, position=bottom}
\centering
\caption{DAGs of causal models from figure \ref{fig: full models} after conditioning on $\boldsymbol{X}=\boldsymbol{x}$}
\label{fig: conditional DAG}
\subcaptionbox{Non IV }[.4\textwidth]{
\begin{tikzpicture}[node distance =0.85 cm and 0.85 cm]
    \node (a) [label = below left: $A$,  point];
    \node (y) [label = right: Y, point, right  = 2 of a];
    \node (u) [label = below: $\boldsymbol{U}$, point, below right = of a];
    \node (b) [label = above: $f(\boldsymbol{x})$, point, above = of y];

    \path (a) edge (y);
    \path[dotted] (u) edge (a);
    \path[dotted] (u) edge (y);
 \end{tikzpicture}}
\subcaptionbox{Non IV with $g$}[.4\textwidth]{
\begin{tikzpicture}[node distance =0.85 cm and 0.85 cm]
    \node (a) [label = below left: $A$,  point];
    \node (g) [label = above: $g(\boldsymbol{x})$, point, above = of a];
    \node (f) [label = above: $f(\boldsymbol{x})$, point, above = of y];
    \node (y) [label = right: Y, point, right  = 2 of a];
    \node (u) [label = below: $\boldsymbol{U}$, point, below right = of a];

    \path (a) edge (y);
    \path[dotted] (u) edge (a);
    \path[dotted] (u) edge (y);
    \path (g) edge (a);
\end{tikzpicture}}

\subcaptionbox{IV}[.4\textwidth]{
\begin{tikzpicture}[node distance =0.85 cm and 0.85 cm]
    \node (a) [label = below left: $A$,  point];
    \node (y) [label = right: Y, point, right  = 2 of a];
    \node (u) [label = below: $\boldsymbol{U}$, point, below right = of a];
    \node (b) [label = above: $f(\boldsymbol{x})$, point, above = of y];
    \node (z) [label = left: $Z$, point, left = of a];

    \path (a) edge (y);
    \path[dotted] (u) edge (a);
    \path[dotted] (u) edge (y);
    \path (z) edge (a);
 \end{tikzpicture}}
\subcaptionbox{IV with $g$}[.4\textwidth]{
\begin{tikzpicture}[node distance =0.85 cm and 0.85 cm]
    \node (a) [label = below left: $A$,  point];
    \node (y) [label = right: Y, point, right  = 2 of a];
    \node (u) [label = below: $\boldsymbol{U}$, point, below right = of a];
    \node (g) [label = above: $g(\boldsymbol{x})$, point, above = of a];
    \node (b) [label = above: $f(\boldsymbol{x})$, point, above = of y];
    \node (z) [label = left: $Z$, point, left = of a];

    \path (a) edge (y);
    \path[dotted] (u) edge (a);
    \path[dotted] (u) edge (y);
    \path (z) edge (a);
    \path (g) edge (a);
 \end{tikzpicture}}
\end{figure}

\subsection{Marginalization of Conditional Bounds}

To obtain bounds on the overall marginal quantity $\theta_f = E[Y(A=f(\boldsymbol{X}))]$, we use the law of total expectation and integrate the relevant conditional bounds over the distribution of $\boldsymbol{X}$:
\[ \theta_f = E_{\boldsymbol{X}} [ E[Y(A=f(\boldsymbol{X})) | \boldsymbol{X}] ] = E_{\boldsymbol{X}} [ P(Y(A=f(\boldsymbol{X}))=1 | \boldsymbol{X}) ] \]
Since for a given $\boldsymbol{x}$, the required potential outcome is $Y(A=f(\boldsymbol{x}))$, we integrate the bounds $L_{f(\boldsymbol{x})}(\boldsymbol{x})$ and $U_{f(\boldsymbol{x})}(\boldsymbol{x})$:

\begin{align}
     L_{C} &= \int_{\mathcal{X}} L_{f(\boldsymbol{x})}(\boldsymbol{x}) dP_{\boldsymbol{X}}(\boldsymbol{x}) =          \sum_{ a \in \{a_0, \dots, a_{k-1}\}} \int_{\{\boldsymbol{x} : f(\boldsymbol{x})=a\}} L_a(\boldsymbol{x}) dP_{\boldsymbol{X}}(\boldsymbol{x})
     \label{eq:cond_lower}\\
     U_{C} &= \int_{\mathcal{X}} U_{f(\boldsymbol{x})}(\boldsymbol{x}) dP_{\boldsymbol{X}}(\boldsymbol{x}) 
     =\sum_{a \in \{a_0, \dots, a_{k-1}\}} \int_{\{\boldsymbol{x} : f(\boldsymbol{x})=a\}} U_a(\boldsymbol{x}) dP_{\boldsymbol{X}}(\boldsymbol{x}),
     \label{eq:cond_upper}
\end{align}
where the last equality follows from splitting the integral over disjoint sets $\mathcal{X}=\bigsqcup_{a \in \mathcal{A}} \{\boldsymbol{x} \mid f(\boldsymbol{x})=a \}$.
\begin{corollary} \label{cor: valid bound conditional}
These marginalized bounds $[L_{C}, U_{C}]$ are valid for $\theta_f = E[Y(A=f(\boldsymbol{X}))]$    
\end{corollary}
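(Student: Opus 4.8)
The plan is to combine the law of total expectation with the pointwise validity of the conditional bounds and then invoke monotonicity of integration. First I would write $\theta_f = E_{\boldsymbol{X}}[P(Y(A=f(\boldsymbol{X}))=1 \mid \boldsymbol{X})]$ via the tower property, exactly as in the display preceding the corollary. The crucial reduction is that within a stratum $\boldsymbol{X}=\boldsymbol{x}$ the rule collapses to a single fixed treatment level $a_{\boldsymbol{x}} = f(\boldsymbol{x})$; using the determinism of $f$ together with consistency I would argue that $P(Y(A=f(\boldsymbol{X}))=1 \mid \boldsymbol{X}=\boldsymbol{x}) = P(Y(a_{\boldsymbol{x}})=1 \mid \boldsymbol{X}=\boldsymbol{x})$, so the marginal estimand is an average over strata of ordinary stratum-specific counterfactual probabilities.

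Second, I would invoke the assumed validity of the conditional bounds: for every $\boldsymbol{x}$ and every $a$ we have $L_a(\boldsymbol{x}) \le P(Y(a)=1 \mid \boldsymbol{X}=\boldsymbol{x}) \le U_a(\boldsymbol{x})$, where these bounds arise from applying the LP machinery to the conditioned causal graph of Figure \ref{fig: conditional DAG} a). Specializing to $a = f(\boldsymbol{x})$ yields the pointwise sandwich $L_{f(\boldsymbol{x})}(\boldsymbol{x}) \le P(Y(f(\boldsymbol{x}))=1 \mid \boldsymbol{X}=\boldsymbol{x}) \le U_{f(\boldsymbol{x})}(\boldsymbol{x})$ for $P_{\boldsymbol{X}}$-almost every $\boldsymbol{x}$.

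Third, since integration against the probability measure $P_{\boldsymbol{X}}$ is monotone, integrating this chain of inequalities over $\mathcal{X}$ gives $L_C \le \theta_f \le U_C$ directly, with the decomposition into $\sum_{a} \int_{\{f=a\}}$ following from partitioning $\mathcal{X}$ into the disjoint preimages $\{\boldsymbol{x} : f(\boldsymbol{x})=a\}$ as already recorded in \eqref{eq:cond_lower}--\eqref{eq:cond_upper}.

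The main obstacle I anticipate is not the integration step but the two supporting facts it presupposes: (i) that $\boldsymbol{x} \mapsto L_{f(\boldsymbol{x})}(\boldsymbol{x})$ and $\boldsymbol{x} \mapsto U_{f(\boldsymbol{x})}(\boldsymbol{x})$ are measurable so that the integrals defining $L_C$ and $U_C$ are well defined—this holds because the symbolic LP bounds are piecewise-rational functions of the conditional observed probabilities and $f$ is a fixed measurable rule—and (ii) that the conditional bounds are genuinely valid for the stratum-specific potential outcome, which rests on the graph-mutilation construction justified in the Supplement via the criteria of \cite{Richardson2002-bk}. Once those are granted, the corollary is an immediate consequence of monotone integration; since no sharpness claim is made, I would emphasize that only validity, not tightness, needs to be verified here.
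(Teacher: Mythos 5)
Your proposal is correct and follows essentially the same route as the paper's proof: both rest on the pointwise validity of the stratum-specific bounds $L_{f(\boldsymbol{x})}(\boldsymbol{x}) \le P(Y(f(\boldsymbol{x}))=1 \mid \boldsymbol{X}=\boldsymbol{x}) \le U_{f(\boldsymbol{x})}(\boldsymbol{x})$ followed by marginalization over $P_{\boldsymbol{X}}$ and summation over the disjoint preimages $\{\boldsymbol{x} : f(\boldsymbol{x})=a\}$; the only difference is that the paper delegates the marginalization and summation steps to cited results (Proposition 1 of \citet{Gustav2025} and \citet{Gabriel2022}) whereas you argue them directly via monotonicity of integration. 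Your explicit attention to measurability of $\boldsymbol{x} \mapsto L_{f(\boldsymbol{x})}(\boldsymbol{x})$ is a point the paper leaves implicit, but it does not change the substance of the argument.
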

\begin{proof}
    
This follows when looking at the RHS of \eqref{eq:cond_lower} and \eqref{eq:cond_upper}. For an arbitrary $a$ and since $\{L_a(\boldsymbol{x}) | a \in \mathcal{A}\}$ and $\{U_a(\boldsymbol{x}) | a \in \mathcal{A}\}$ are point wise valid their marginalized bounds over $\{\boldsymbol{x} \mid f(\boldsymbol{x})=a\}$ for each $a$ are also valid, see \citep[Proposition 1]{Gustav2025}. Last, we need to apply that summation of valid bounds are still valid bounds, as stated in \citep{Gabriel2022}. 
\end{proof}

When applying the estimands in Equations \ref{eq:cond_lower} and \ref{eq:cond_upper} to real-world settings, it is necessary to estimate the conditional probabilities involved. These estimates are then plugged into the symbolic bounds. Estimation approaches can range from simple models such as logistic regression to more complex methods including neural networks. Alternatively, influence function-based estimators, such as those proposed by \citet{Levis2025}, may also be employed.

While the conditional bounds $L_a(\boldsymbol{x}), U_a(\boldsymbol{x})$ might be sharp for the conditional probability $P(Y(a)=1 | \boldsymbol{X}=\boldsymbol{x})$, the marginalized bounds $L_{C}, U_{C}$ are not guaranteed to be sharp for the marginal estimand $\theta_f$. Sharpness can be lost during marginalization if the conditional bounds are not uniformly sharp \citep{Gustav2025}.

\section{IV and observed covariates} \label{section: iv_and_beyond}

The preceding discussion of bounds has not included IVs. Without IVs the causal null is unlikely to be excluded from the bounds. For the binary case we can see from Equation \ref{eq: binary non Iv} that the bounds have width one, implying that the causal null is always included.  Fortunately, the presented bounding techniques can be readily extended to incorporate IVs, additional observed covariates or guidelines; this extension is straightforward and relies entirely on the established theoretical results and tools.

\subsection{Instrumental Variable (IV) Setting}

Suppose we have an instrumental variable $Z$ that affects treatment $A$, is independent of the unmeasured confounders $\boldsymbol{U}$, and affects the outcome $Y$ only through $A$. The causal graph is shown in Figure \ref{fig: full models} c). The IV conditions provide additional constraints that can narrow the bounds on causal effects.

Both bounding strategies can be adapted:
\begin{enumerate}
    \item [(i)] Reduction with IV: Apply Theorem \ref{theorem: merge latent} to Figure \ref{fig: full models} c) (assuming the standard IV structure where $Z \perp \boldsymbol{U}$ and $\boldsymbol{U}$ does not cause $\boldsymbol{X}$). This yields the reduced DAG in Figure \ref{fig: reduced models} c). The distribution of the observed variables is now $P(A, Y, B \mid Z)$. The bounds $L_R, U_R$ for $\theta_f$ are computed using \verb|causaloptim| based on the observed distribution $P(Z, A, Y, B)$, exploiting the IV constraints. These bounds are generally narrower than the non-IV bounds. 
    
    \item [(ii)] Conditioning with IV: Condition on $\boldsymbol{X}=\boldsymbol{x}$ in Figure \ref{fig: full models} c). In the same way as before it can be shown that the resulting model is the DAG in \ref{fig: conditional DAG} c). The conditional DAG involves $(Z, A, Y, \boldsymbol{U})$. Within this stratum, $Z$ still acts as an instrument for the effect of $A$ on $Y$ (confounded by $\boldsymbol{U}$). We compute conditional bounds $L_a(\boldsymbol{x})$ and $U_a(\boldsymbol{x})$ for $P(Y(a)=1 | \boldsymbol{X}=\boldsymbol{x})$ using \verb|causaloptim|, based on the conditional distribution $P(Z, A, Y | \boldsymbol{X}=\boldsymbol{x})$. Then, we marginalize these bounds using Equations \ref{eq:cond_lower} and \ref{eq:cond_upper} to get $L_C, U_C$. Validity follows from Corollary \ref{cor: valid bound conditional}.
\end{enumerate}

\subsection{Observed Covariates W} \label{subsec:observed_w}

Consider the situation where, in addition to $\boldsymbol{X}$, we observe another set of covariates $\boldsymbol{W}$ which are additional covariates. These covariates $\boldsymbol{W}$ might influence the treatment $A$ and outcome $Y$ and could be related to the unobserved confounders $\boldsymbol{U}$, but they are not part of the treatment rule $f(\boldsymbol{X})$. Examples could include demographic variables or comorbidities not used in a specific risk score $f$. The causal structures are depicted in Figure \ref{fig: introducing W} for the non-IV and IV settings and the NPSEM are provided in the Supplementary material.
For the conditioning approach we can simply condition on $\boldsymbol{W}$ and $\boldsymbol{X}$ jointly, which still gives a valid DAG by \citet{Richardson2002-bk}. We then proceed as in section \ref{section: conditioning}.
For the reduction approach, we can adapt our bounding strategies to incorporate $\boldsymbol{W}$:

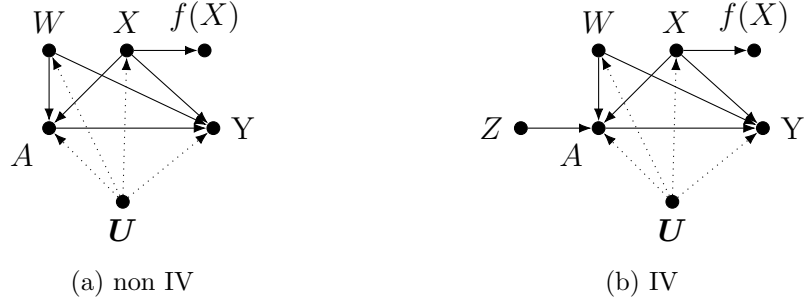
\begin{figure}
\captionsetup[subfigure]{font=footnotesize, position=bottom}
\caption{DAG with additional observed confounder $W$}
\label{fig: introducing W}
\centering
\subcaptionbox{non IV}[.4\textwidth]{
\begin{tikzpicture}[node distance =0.85 cm and 0.85 cm]
    \node (a) [label = below left: $A$,  point];
    \node (y) [label = right: Y, point, right  = 2 of a];
    \node (u) [label = below: $\boldsymbol{U}$, point, below right = of a];
    \node (w) [label = above: $W$, point, above = of a];
    \node (x) [label = above: $X$, point, right = of g];
    \node (b) [label = above: $f(X)$, point, right = of x];

    \path (a) edge (y);
    \path (w) edge (a);
    \path (w) edge (y);
    \path (x) edge (a);
    \path (x) edge (y);
    \path (x) edge (b);
    \path[dotted] (u) edge (a);
    \path[dotted] (u) edge (y);
    \path[dotted] (u) edge (x);
    \path[dotted] (u) edge (w);
\end{tikzpicture}}
\subcaptionbox{IV }[.4\textwidth]{
\begin{tikzpicture}[node distance =0.85 cm and 0.85 cm]
    \node (a) [label = below left: $A$,  point];
    \node (y) [label = right: Y, point, right  = 2 of a];
    \node (u) [label = below: $\boldsymbol{U}$, point, below right = of a];
    \node (w) [label = above: $W$, point, above = of a];
    \node (x) [label = above: $X$, point, right = of g];
    \node (b) [label = above: $f(X)$, point, right = of x];
    \node (z) [label = left: $Z$, point, left = of a];

    \path (a) edge (y);
    \path (w) edge (a);
    \path (w) edge (y);
    \path (x) edge (a);
    \path (x) edge (y);
    \path (x) edge (b);
    \path (z) edge (a);
    \path[dotted] (u) edge (a);
    \path[dotted] (u) edge (y);
    \path[dotted] (u) edge (x);
    \path[dotted] (u) edge (w);
    
\end{tikzpicture}}
\end{figure}

\paragraph{Reduction Approach with W:}
We follow the standard reduction procedure by treating $\boldsymbol{X}$ (the variables in the rule $f$) as unobserved and merging them with $\boldsymbol{U}$ into $\boldsymbol{U}^* = (\boldsymbol{U}, \boldsymbol{X})$ using Theorem \ref{theorem: merge latent}. But, we keep $\boldsymbol{W}$ as an observed variable. The reduced DAG now involves observed variables $(A, Y, B=f(X), \boldsymbol{W})$ (plus $Z$ in the IV case) and the latent $\boldsymbol{U}^*$. Bounds $L_R, U_R$ for $\theta_f$ are then computed using LP methods based on the observed joint distribution. The validity follows from Theorem \ref{theorem: merge latent}.

\section{Introducing Guidelines} \label{subsec:guidelines}

In \citet{Hruza2025}, a novel approach is taken by assessing the utility of a new rule $f(X)$ relative to an existing guideline $ g(X)$, offering better insights into decision-making under the given framework.
 The clinical utility is then defined as the difference in expected outcomes:
\begin{equation}
    CU(f, g) = E[Y(A=f(X))] - E[Y(A=g(X))] = \theta_f - \theta_g.
\end{equation}
We assume the guideline $g(X)$ may influence the observed treatment $A$, as depicted in the causal DAGs in Figures \ref{fig: full models} b) (non-IV) and d) (IV), where $G=g(X)$ is included as a node influencing $A$. Our aim is to bound $CU(f, g)$ directly using either the reduction or conditioning strategy, which may yield narrower bounds than bounding of $\theta_f$ and $\theta_g$ separately.

\subsubsection{Bounding the Clinical Utility Difference}

\paragraph{Reduction Approach:}
Following Theorem \ref{theorem: merge latent}, we simplify the causal structure by treating $X$ as unobserved, leading to the reduced DAGs in Figures \ref{fig: reduced models} b) (non-IV) and d) (IV). In these reduced models, the observed variables include $(A, Y, B=f(X), G=g(X))$ (and $Z$ in the IV case). The estimand $CU(f,g)$ can be expressed as the difference between sums of counterfactual probabilities involving $B$ and $G$:
\begin{align*} \label{eq:cu_estimand_reduced}
    CU(f,g) = \left( \sum_{a_j \in \mathcal{A}} P(Y(a_j)=1, B=a_j) \right) - \left( \sum_{a_k \in \mathcal{A}} P(Y(a_k)=1, G=a_k) \right).
\end{align*}
We apply LP optimization techniques (e.g., \verb|causaloptim|) directly to this linear objective function, subject to the constraints imposed by the observed joint distribution and the causal structure of the reduced DAG. This process yields sharp bounds relative to the reduced model, which are valid bounds for the original problem.

\paragraph{Conditioning Approach:}
Alternatively, we condition on $\boldsymbol{X}=\boldsymbol{x}$. This leads to the conditional DAGs in Figures \ref{fig: conditional DAG} b) (non-IV) and d) (IV). For each stratum $\boldsymbol{x}$, let $a_{\boldsymbol{x}} = f(\boldsymbol{x})$ and $a'_{\boldsymbol{x}} = g(\boldsymbol{x})$. We are interested in the conditional clinical utility difference:
\begin{align*}
        CU(\boldsymbol{x}) = P(Y(a_{\boldsymbol{x}})=1 | \boldsymbol{X}=\boldsymbol{x}) - P(Y(a'_{\boldsymbol{x}})=1 | \boldsymbol{X}=\boldsymbol{x}).
\end{align*}
Using LP methods on the conditional DAG and the conditional observed distribution ($P(A, Y | \boldsymbol{X}=\boldsymbol{x})$ or $P(Z, A, Y | \boldsymbol{X}=\boldsymbol{x})$), we compute sharp bounds $[L_{a_x,a_{x'}}(\boldsymbol{x}), U_{a_x,a_{x'}}(\boldsymbol{x})]$for this difference $CU(\boldsymbol{x})$ within the stratum.
The overall bounds for $CU(f,g)$ are then obtained by marginalizing these conditional bounds over the distribution of $\boldsymbol{X}$:
\begin{align*}
     L_{CU,C} &= \int_{\mathcal{X}} L_{a_x,a_{x'}}(\boldsymbol{x}) \, dP_{\boldsymbol{X}}(\boldsymbol{x}), \text{ and}\\
     U_{CU,C} &= \int_{\mathcal{X}} U_{a_x,a_{x'}}(\boldsymbol{x}) \, dP_{\boldsymbol{X}}(\boldsymbol{x}).
\end{align*}
As established previously, these marginalized bounds are valid (Corollary \ref{cor: valid bound conditional}) for the overall clinical utility difference $CU(f,g)$, although potentially not sharp.

Both strategies thus allow for bounding of the clinical utility difference relative to a guideline $g(X)$ under unmeasured confounding.

\section{Simulations}\label{section: simulations}

We have conducted simulations to showcase the validity of the bounds and empirically investigate which bounds are narrower for certain settings.
For the simulation we investigate the setting in figure: \ref{fig: full models} c), with the following domains: $Z,Y,U \in \{0,1\}$, $A \in \{0,1,2\}$ and $X \in \{0,1,2,3,4,5\}$. The dimensions are chosen such that it is unfeasible to apply LP directly, e.g. we have run a LP directly of the problem (see \eqref{eq: naive} on a singe core of an Apple M2 chip for more than 24 hours, after which we have stopped the LP. Hence we have to use our approaches. From this setup we sample a valid but random distribution. We do this by uniformly sampling a distribution for each vertex conditioned on it's parents.

Let $ v$ be a vertex with domain $ \mathcal{D}_v = \{v_1, \dots, v_d\}$ with cardinality $d$. To define the conditional probability distribution $P(v = v_i \mid \text{pa}(v))$, we proceed as follows.
First, we fix a particular assignment of values to the parent variables $ \text{pa}(v) $. For each such combination of parent values, we generate a separate conditional distribution over the values of $ v $.
Next, we generate $d$ independent random variables from the uniform distribution,  $r_1, r_2, \dots, r_d \sim \text{Uniform}(0,1)$:
Then, we normalize these values to obtain a valid probability distribution (e.g sums to $1$):
$$
P(v = v_i \mid \text{pa}(v)) = \frac{r_i}{\sum_{j=1}^{d} r_j}
$$
The above procedure is repeated for every possible assignment of the parent variables $\text{pa}(v)$, resulting in a complete conditional probability table for $v$.

Based on all conditional distribution we can compute the joint distribution, which we use to calculate the true value of counterfactual outcome of $E[Y(f(X))]$ as well as the bounds using the two proposed methods respectively. In this simulation we have chosen:
\begin{align*}
    f \colon \{0, \dots, 5\} &\to \{0, 1, 2\}\\
    \{0,1\} &\mapsto 0\\
    \{2,3\} &\mapsto 1\\
    \{4,5\} &\mapsto 2
\end{align*}

We have repeated these steps $10 \times 10^6$ times. 
We observe that in all simulations the true value of the target causal effect lies in between the bounds, e.g. the bounds are valid for both approaches. Interestingly, in all of the simulations the conditional bounds are either as narrow or narrower than the reduced bounds. For this setting this suggests that the information loss by ignoring X is much stronger than the possible widening of bounds by adding together two bounds conditional on X. A Figure demonstrating this can be found in the Supplementary material and the 
implementation details can be found here \url{https://github.com/jhruza/causal_bounds}.

We have performed additional simulations trying to find an example where the reduced bounds outperform the conditional. As we were unable to find such a case, we conjecture that the conditional bounds are always as narrow or narrower than the reduced bounds. This is not to say that we believe the conditional strategy provides sharp bounds given the information from the original, unconditional, DAG and NPSEM.

\section{Motivating data example}
\label{section: real data}

The Learning Early About Peanut Allergy study \citep{DuToit2015} investigated the impact of early peanut introduction on the development of peanut allergy in high-risk infants. The study enrolled 640 infants, aged between 4 and 11 months, who had severe eczema, egg allergy, or both. These participants were randomized into two groups: one group was assigned to consume $6$ g of peanut protein per week, while the other was assigned to avoid peanuts until they reached 60 months of age. The true amount of consumed peanut protein was also recorded. A key piece of information recorded for each participant prior to randomization was their initial sensitivity to peanuts, determined by a skin-prick test. The primary outcome measured was the presence of peanut allergy at 60 months, which was assessed through an oral food challenge. The randomization to either the peanut consumption or avoidance group serves as the instrumental variable in your analysis. The baseline skin-prick test result can be interpreted as a covariate for each child available in this dataset.
Clinical practice guidelines from the UK (1998) and the US (2000) recommended excluding allergenic foods from the diets of high-risk infants and their mothers during pregnancy and breastfeeding.
These guidlines were withdrawn in 2008 \citep{DuToit2015}.

Here the simple standard of care is not identified as $E\{Y\}$, as randomization changes the treatment given. We can however consider this a setting with guidelines, where we are only interested in the difference between enforcing those guidelines and the new treatment rule. This setting is as described in section \ref{subsec:guidelines} which corresponds to the DAG in Figure \ref{fig: full models} d).

In particular the IV $Z \in \{0,1\}$ is the randomization into the peanut avoidance or peanut consumption group.  The outcome $Y\in\{0,1\}$ is presence of peanut allergy at 60 month where $Y=1$ indicates a allergic reaction and $Y=0$ no allergic reaction; note that we have ignored the drop-out in the study, which was the point of the bounds analysis in \citet{Gabriel2021}. The covariate $X\in \{0,1\}$ is the result of baseline skin-prick test, where $X=1$ indicates a positive test.  The guideline $g\colon x \mapsto 0$ assigns everyone to avoid peanuts.
We investigate two ITRs. The first is simply the exposure all children to peanuts. The second is motivated by the observation that in clinical practice, individuals with mild allergic sensitivity are typically exposed to lower doses of the allergen, while those with more tolerance may receive higher doses to induce or maintain desensitization.:
\begin{enumerate}
    \item $f_1$ assigns everyone to peanut exposure. In this case the treatment $A \in \{0,1\}$ where $A=0$ is peanut exposure of $\leq$ 0.2 g per week and $A=1$ peanut exposure of $> 0.2$g per week. And $f_1=1$.
    \item $f_2$ also assigns everyone to peanut exposure but to different levels based on the skin-prick test. The treatment is given by $A\in \{0,1,2\}$ where $A=0$ is peanut exposure $\leq 0.2$ g per week, $A=1$ is $0.2 <$ grams of peanut exposure $\leq 6$ per week and $A=2$ is $6 <$ grams of peanut exposure per week. Formally $f(0):=1, f(1):=2$.
\end{enumerate}

Following Section \ref{subsec:guidelines}, the symbolic bounds are calculated using the reduction and the conditioning approach. The Clinical utility $CU(f, g) = E[Y(A=f(X))] - E[Y(A=g(X))] = P(Y(A=f(X))=1) - P(Y(A=g(X))=1)$ is the change in percentage the population would benefit from ITR $f$ with respect to $g$. That is a negative CU indicates $f$ is superior and a positive CU indicates the SOC $g$ is superior. 

For $f_1$, the bounds are given by $[-0.154, -0.109]$ using the reduction approach, and $[-0.153, -0.108]$ using the conditional approach. In both cases, zero is excluded from the interval indicating a consistent negative effect. If one is convinced that the causal mechanism is given by the DAG \ref{fig: full models} d), then that peanut exposure is clearly the preferable treatment. These results align with current understanding and are reflected in recent clinical guidelines \citep{Togias2017}. 

In this particular setting, we could also have applied the original Balke and Pearl bounds \cite{Balke1997}, since here the clinical utility coincides with the classical Average Treatment Effect (ATE) comparing treatment versus no treatment.

This, however, is not feasible when evaluating the clinical utility of $f_2 $. For $f_2$, the bounds are given by $[-0.164, 0.234]$ under the reduction approach and $[-0.163, 0.237]$ under the conditional approach. These bounds are notably wider and, although the lower bound is slightly more favorable (more negative), the range includes zero and positive values. This might seem counterintuitive at first, but it reflects the underlying uncertainty in the data.

It's important to avoid interpreting these bounds as confidence intervals as they serve a different purpose. While confidence intervals provide information about estimation precision and likelihood, bounds represent the full range of values consistent with the data under unmeasured confounding (which can be arbitrary high, but consistent with the assumed DAG). A higher upper bound simply indicates that such an outcome cannot be ruled out; it does not imply that it is probable.

\section{Challenges and Discussion}\label{section: discussion}
This paper addressed the problem of evaluating new clinical treatment rules $f(\boldsymbol{X})$ using observational data where unmeasured confounding may be present. We developed and investigated two main bounding strategies—Reduction and Conditioning—to estimate bounds for the expected outcome $E[Y(A=f(\boldsymbol{X}))]$ and the clinical utility $CU(f,g)$. These methods provide bounds that are derived without parametric assumptions, conditional on the correctness of the specified underlying causal DAG and NPSEM assumptions.

The reduction strategy, based on Theorem \ref{theorem: merge latent}, uses a simplified causal structure by treating $\boldsymbol{X}$ as unobserved. This enables the application of existing linear programming methods to a reduced problem. Its advantage is computational feasibility and nonparametric estimation of the obtained boumds. However, a disadvantage is potential information loss, as it marginalizes over $\boldsymbol{X}$ before bounding, which may lead to wider, yet still valid, bounds than if all information were used.

The conditioning strategy, in contrast, first derives bounds conditional on $\boldsymbol{X}=\boldsymbol{x}$ and then marginalizes these bounds over $P(\boldsymbol{X})$ (Equations \ref{eq:cond_lower} and \ref{eq:cond_upper}). This approach attempts to use all available information in $\boldsymbol{X}$. Our simulation studies (Section \ref{section: simulations}) suggested that, for the scenarios tested, Conditioning frequently produced narrower bounds and was never wider than the reduction strategy bounds. Additionally as we are stratifying over $\boldsymbol{X}$ it is computationally less intensive than the reduced bounds.

Future research directions include determining theoretical conditions for the sharpness of marginalized conditional bounds. Extending these frameworks to assist in the direct optimization of rules $f(\boldsymbol{X})$ under unmeasured confounding is also an area for further study.
Another consideration is that when using plug-in estimators, the sampling variability of the symbolic bounds increase as the number of levels or the dimension of $X$ grows.

In conclusion, the proposed DAG Reduction and Conditioning frameworks offer useful and transparent methods for assessing clinical treatment rules when unmeasured confounding is a concern, conditional on a correctly specified causal graph. They enable researchers to quantify uncertainty and improve judgments about the potential impact of new clinical guidelines.

\subsubsection*{Funding}
SB acknowledges funding from the MRC Centre for Global Infectious Disease Analysis (reference MR/X020258/1), funded by the UK Medical Research Council (MRC). This UK funded award is carried out in the frame of the Global Health EDCTP3 Joint Undertaking. SB is funded by the National Institute for Health and Care Research (NIHR) Health Protection Research Unit in Modelling and Health Economics, a partnership between UK Health Security Agency, Imperial College London and LSHTM (grant code NIHR200908). Disclaimer: “The views expressed are those of the author(s) and not necessarily those of the NIHR, UK Health Security Agency or the Department of Health and Social Care.”. SB acknowledges support from the Novo Nordisk Foundation via The Novo Nordisk Young Investigator Award (NNF20OC0059309) which also supports JH.  SB acknowledges the Danish National Research Foundation (DNRF160) through the chair grant.  SB acknowledges support from The Eric and Wendy Schmidt Fund For Strategic Innovation via the Schmidt Polymath Award (G-22-63345). EG and MS are funded in part by NNF Grant NNF22OC0076595.

\subsubsection*{Data Availability Statement}
The publicly available data for the LEAP trial were downloaded from the Immune Tolerance Network TrialShare website (\url{https://www.itntrialshare.org/}, study identifier: ITN032AD).

\bibliographystyle{biom}
\bibliography{references}
\end{document}